\documentclass[12pt]{amsart}
\usepackage{graphicx}
\usepackage[headings]{fullpage}
\usepackage{amssymb,epic,eepic,epsfig,amsbsy,amsmath,amscd,color,bm}
\numberwithin{equation}{section}
                        \theoremstyle{plain}
\usepackage{mathrsfs}

\newcommand\no[1]{}

\newtheorem{theorem}{Theorem}[section]
\newtheorem{thm}{Theorem}
\newtheorem{lemma}[theorem]{Lemma}

\theoremstyle{definition}
\newtheorem{remark}[theorem]{Remark}

\def\CH{\mathcal H}

\def\CS{\mathcal S}
\def\CT{\mathcal T}

\def\be { \begin{equation} }
\def\ee { \end{equation} }

\begin{document}
\allowdisplaybreaks
\baselineskip16pt
\title[An upper bound for the purity of APPT states]{An upper bound for the purity of absolutely positive partial transpose  states}

\author{Anh T. Tran}
\address{Department of Mathematical Sciences, The University of Texas at Dallas, Richardson, TX 75080, USA}
\email{att140830@utdallas.edu}

\thanks{2020 \textit{Mathematics Subject Classification}.\/ 81P42}
\thanks{{\it Key words and phrases.\/} absolutely separable state, absolutely PPT state, purity}

\begin{abstract}
A quantum state is called absolutely separable (resp. absolutely positive partial transpose (APPT)) if it remains separable (resp. positive partial transpose (PPT)) under any global unitary transformation. It is known that the set of all absolutely separable states is a subset of the set of all APPT states. Moreover, these two sets are identical for qubit-qudit systems. 
In this note, we give an upper bound for the purity of APPT bipartite states (and therefore for the purity of absolutely separable bipartite states). For qubit-qudit systems, this upper bound becomes the maximum purity of APPT (and absolutely separable) states. 
\end{abstract}
\maketitle

\section{Main result}

Quantum entanglement is a fundamental phenomenon in quantum mechanics where two or more particles become interconnected such that the quantum state of each particle cannot be described independently of the state of the others, even when the particles are separated by large distances. Entanglement is central to understanding the distinction between classical and quantum theories. Moreover, it serves as a core resource for parallel quantum processing and quantum algorithms. 

A bipartite quantum state $\rho$ on a Hilbert space $\CH_A \otimes \CH_B$ is called separable if  it can be written as a convex combination  of product states, namely, $\rho =\sum_i p_i \, \rho_i^A \otimes \rho_i^B$ where $p_i \ge 0$, $\sum_i p_i=1$, and $\rho_i^A$ and $\rho_i^B$ are density operators on subsystems $A$ and $B$ respectively. An entangled state is a quantum state that is not separable. Peres \cite{Pe} showed that any separable state $\rho$ has a positive partial transpose (PPT), meaning that the density matrix of $\rho$ remains positive semi-definite after taking the partial transpose with respect to one of its subsystems. 

Following \cite{KZ} and \cite{Hi}, we say that a state $\rho$ on $\CH_A \otimes \CH_B$ is absolutely separable (resp. APPT) if it remains separable (resp. PPT) under any global unitary transformation, namely, $U \rho U^\dagger$ is separable (resp. PPT) for any unitary operator $U: \CH_A \otimes \CH_B \to \CH_A \otimes \CH_B$. Then it follows from \cite{Pe} that the set of all absolutely separable states is a subset of the set of all APPT states. Johnston \cite{Jo} proved that these two sets are identical for qubit-qudit systems (the case when $\CH_A$ has dimension $2$). However, in general, it is still unknown whether they are identical or not. 

Hildebrand \cite{Hi} provided  a necessary and sufficient condition for a bipartite state $\rho$ to have the APPT property. This condition can be expressed as a set of linear matrix inequalities on the eigenvalues of the density matrix of $\rho$. In particular, the inequality $\lambda_1 \le \lambda_{mn-1} + 2\sqrt{\lambda_{mn-2} \lambda_{mn}}$ always holds true for any APPT state $\rho$ on $\CH_A \otimes \CH_B$, where
 $\lambda_1 \ge \lambda_2 \ge \cdots \ge \lambda_{mn}$ are the eigenvalues of the density matrix of $\rho$ in the decreasing order, and 
$m$ and $n$ are respectively the dimension of $\CH_A$ and $\CH_B$. In general, the set of linear matrix inequalities in \cite{Hi} can be complicated, but it becomes particularly simple in the case of qubit-qudit systems (i.e. $m=2$). More precisely, for qubit-qudit systems the inequality $\lambda_1 \le \lambda_{2n-1} + 2\sqrt{\lambda_{2n-2} \lambda_{2n}}$ is exactly  the necessary and sufficient condition for a bipartite state $\rho$ to have the APPT property. Regarding absolute separation, Gurvits and Barnum \cite{GB} proved that a state $\rho$ on $\CH_A \otimes \CH_B$  is absolutely separable if its purity satisfies $\mathrm{Tr} (\rho^2) \le \frac{1}{mn-1}$. However, finding a necessary and sufficient condition for $\rho$ to be absolutely separable is still a challenging problem. 

We now discuss the purity $\mathrm{Tr} (\rho^2)$ of absolutely separable and APPT states $\rho$ on $\CH_A \otimes \CH_B$. Kondra, Hita, Neumann, Kampermann and Bruß \cite[Lemma 5]{K+} showed that the purity of any absolutely separable state $\rho \in \CH_A \otimes \CH_B$, where $mn \ge 6$, is bounded above by $\frac{2}{mn}$. For qubit-qudit systems (i.e. $m=2$), Song and Chen \cite[Corollary 11]{SC} proved that the maximum purity of APPT (and absolutely separable) states is equal to $\frac{2}{3n}$ if $n$ is even and $n \ge 4$, and equal to $\frac{6n+4}{(3n+1)^2}$ if $n$ is odd. They also proved that in the case of two-qubit systems (i.e. $m=n=2$), 
the maximum purity of APPT (and absolutely separable) states is equal to $\frac{3}{8}$. See also \cite[Theorem 3.1]{DK}. 

Our main result of this note is the following upper bound for the purity of APPT states (and therefore for the purity of absolutely separable bipartite states). For qubit-qudit systems (i.e. $m=2$), this upper bound is exactly the maximum purity of APPT (and absolutely separable) states determined in \cite{SC}.

\begin{thm}
The purity of any absolutely positive partial transpose state $\rho$ on $\CH_A \otimes \CH_B$, where $mn \ge 8,$ is bounded above by 
$$
\begin{cases}
\frac{4}{3mn} & \text{if $mn \equiv 0 \pmod{4}$}, \medskip \\

	                 \frac{4(3mn-2)}{(3mn-1)^2} & \text{if $mn \equiv 1 \pmod{4}$}, \medskip \\
	                 
	                 \frac{4(3mn + 4)}{(3mn+2)^2}  & \text{if $mn \equiv 2 \pmod{4}$}, \medskip\\
	                 
	               \frac{4(3mn + 2)}{(3mn+1)^2} & \text{if $mn \equiv 3 \pmod{4}$}. 
\end{cases}
$$
\end{thm}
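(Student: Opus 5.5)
The plan is to replace the APPT condition by a single linear constraint on the spectrum and then maximize the (convex) purity over the resulting polytope. Put $N=mn$ and let $\lambda_1\ge\cdots\ge\lambda_N\ge 0$ be the eigenvalues of $\rho$, so that $\mathrm{Tr}(\rho^2)=\sum_i\lambda_i^2$. By Hildebrand's necessary condition quoted above, $\lambda_1\le \lambda_{N-1}+2\sqrt{\lambda_{N-2}\lambda_N}$. Using $2\sqrt{\lambda_{N-2}\lambda_N}\le\lambda_{N-2}+\lambda_N$, this gives the linear constraint
$$\lambda_1\le \lambda_{N-2}+\lambda_{N-1}+\lambda_N .$$
It therefore suffices to bound $\sum_i\lambda_i^2$ on the polytope $P$ cut out by four conditions: the ordering inequalities, $\lambda_N\ge0$, $\sum_i\lambda_i=1$, and the displayed inequality. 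This relaxation can only enlarge the feasible set, so any bound on $P$ is a valid upper bound. Its tightness for $m=2$ is consistent with \cite{SC}, though that is not needed for the proof.

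Since $\sum\lambda_i^2$ is convex, its maximum on the compact polytope $P$ is attained at a vertex. At a vertex, $N-1$ linearly independent constraints among these are active (besides $\sum\lambda_i=1$). The only non-ordering constraints are $\lambda_N\ge 0$ and the three-term inequality.
\begin{itemize}
\item If the three-term inequality is not active, the vertex is either the point with all $\lambda_i$ equal, or a point with $\lambda_N=0$. In the latter case, the remaining active constraints force a block structure whose purity can be checked to be smaller; alternatively one observes that $\lambda_N=0$ combined with the other constraints forces $\lambda_1\le\lambda_{N-2}+\lambda_{N-1}$, which is handled similarly.
\item If the three-term inequality is active, then all but one ordering inequality are equalities. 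The spectrum then has at most two distinct values, $a>b$: the top $j$ eigenvalues equal $a$ and the remaining ones equal $b$. Allowing for the degenerate possibility of one intermediate value $c$ with $a\ge c\ge b$, the spectrum has the form $(a^{(j)},c,b^{(N-j-1)})$.
\end{itemize}
The assumption $N\ge8$ guarantees that $\lambda_{N-2},\lambda_{N-1},\lambda_N$ lie in the bottom block whenever $j$ is near $N/4$, so the active constraint reads $a=3b$.

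Next comes the optimization. Ignoring integrality, take $j$ entries equal to $3b$ and $N-j$ entries equal to $b$. Normalization gives $b=1/(N+2j)$, and the purity is
$$f(j)=\frac{N+8j}{(N+2j)^2}.$$
This is maximized at $j=N/4$, with value $4/(3N)$, which settles the case $N\equiv0\pmod 4$. For the other residues $j=N/4$ is not an integer, so the vertex carries one intermediate eigenvalue $c\in[b,3b]$. Write the spectrum as $(3b)^{(j)},\,c,\,b^{(N-j-1)}$ with $j=\lfloor N/4\rfloor$, impose normalization, and maximize the resulting one-variable convex function of $c$ at the endpoints $c=b$ or $c=3b$. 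Comparing these endpoint values with the neighbouring $j$ should reproduce the three remaining formulas. For example, when $N\equiv2\pmod4$, taking $j=(N-2)/4$ and $c=b$ gives only $4(3N-4)/(3N-2)^2$, which is strictly smaller than the stated bound $4(3N+4)/(3N+2)^2$. So the correct extremal configuration there must be a different vertex, and identifying it carefully is part of the work.

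The main obstacle is precisely this bookkeeping: enumerating the vertices of $P$ correctly, including those where the active constraints mix $\lambda_N\ge0$ with the three-term relation, and checking that the claimed values are the maxima in each residue class. That last check is nontrivial, as the case $N\equiv2\pmod4$ just noted already shows. I would first parametrize a general candidate spectrum $(a^{(j)},c,b^{(N-j-1)})$ with $a=c+2b$ or $a=3b$, compute its purity in closed form, and then optimize over $j$ and over the free ratio, reducing each residue class modulo $4$ to a short explicit comparison.
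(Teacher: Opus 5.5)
\textbf{There is a genuine gap: the vertex enumeration is incomplete, and the comparison that actually uses $N\ge 8$ is missing.} Your relaxation to $\lambda_1\le\lambda_{N-2}+\lambda_{N-1}+\lambda_N$ and the reduction to vertices of the resulting polytope are exactly the paper's strategy.

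The first problem is the claim that ``all but one ordering inequality are equalities'' when the three-term constraint is active. This is false when $\lambda_N\ge 0$ is also tight, because then only $N-3$ ordering inequalities need be equalities. Your reduction to $a=3b$ also silently drops the two-valued vertex with $j=N-2$, $b=0$.

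The paper gets the complete list by intersecting $G=\lambda_{N-2}+\lambda_{N-1}+\lambda_N-\lambda_1=0$ with the edges $\overline{\nu_i\nu_j}$ of the simplex of ordered spectra, where $\nu_i$ is uniform on the first $i$ entries. Besides your points $(3b)^{(i)},b^{(N-i)}$ with $b=1/(N+2i)$, this yields two more families, and you never compare either with your bound:
\begin{itemize}
\item The vertices $\bigl(\tfrac{2}{N-1+i}\bigr)^{(i)},\bigl(\tfrac{1}{N-1+i}\bigr)^{(N-1-i)},0$ for $1\le i\le N-3$. Their purity is $(N-1+3i)/(N-1+i)^2\le 9/(8(N-1))$.
\item The vertices $\nu_{N-2},\nu_{N-1},\nu_N$. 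Here $\nu_{N-2}=(\tfrac1{N-2},\dots,\tfrac1{N-2},0,0)$ lies on $G=0$ and has purity $1/(N-2)$.
\end{itemize}
Since $1/(N-2)\le 4/(3N)$ holds exactly when $N\ge 8$, with equality at $N=8$ where $\nu_6$ ties the maximum, this comparison is where the hypothesis enters. It is not needed to place $\lambda_{N-2},\lambda_{N-1},\lambda_N$ in the bottom block; that only requires $j\le N-3$. As written, your argument would equally ``prove'' the bound at $N=7$, where it fails on the relaxed polytope because $1/5>23/121$.

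The second problem is that the residues $N\not\equiv 0\pmod 4$ are left open, and your diagnosis of $N\equiv 2$ is mistaken. There is no vertex carrying an intermediate value $c\in(b,3b)$. The family $(3b)^{(j)},c,b^{(N-j-1)}$ is the segment joining the two-valued vertices with $j$ and $j+1$ large entries. Its endpoints $c=b$ and $c=3b$ are therefore consecutive values of $f(j)=(N+8j)/(N+2j)^2$.

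You only evaluated $c=b$. The other endpoint $c=3b$, i.e.\ $j=(N+2)/4$, gives $f=4(3N+4)/(3N+2)^2$, which is exactly the stated bound. In general the maximum over these vertices is $\max\bigl(f(\lfloor N/4\rfloor),f(\lceil N/4\rceil)\bigr)$, and you must decide which side wins in each residue class. With $u=3N$:
\begin{itemize}
\item For $N\equiv 1$, $(u-2)(u+3)^2-(u+6)(u-1)^2=8u-24>0$, so the floor wins.
\item For $N\equiv 2$, $(u+4)(u-2)^2-(u-4)(u+2)^2=32>0$, so the ceiling wins.
\item For $N\equiv 3$, $(u+2)(u-3)^2-(u-6)(u+1)^2=8u+24>0$, so the ceiling wins.
\end{itemize}
The theorem follows only after these comparisons, together with the comparison against $1/(N-2)$ and $9/(8(N-1))$ above.
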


\begin{remark}
For qutrit-qudit systems (i.e. $m=3$), D\~ung and Khoi \cite[Conjecture 3.2]{DK} predicted the upper bound in Theorem $1$. However, it should be noted that this upper bound is not equal to the maximum purity of APPT states. It is an open problem to determine exactly the maximum purity of absolutely separable and APPT bipartite states, especially in the case of qutrit-qudit systems. 
\end{remark}

The rest of this note is devoted to proving Theorem $1$.

\section{Proof of Theorem $1$}

Suppose that $\rho$ is an APPT state on $\CH_A \otimes \CH_B$, with $m = \dim \CH_A$ and $n = \dim \CH_B$. Let $\lambda_1 \ge \lambda_2 \ge \cdots \ge \lambda_{mn}$ be the eigenvalues of the density matrix of $\rho$ in the decreasing order.
Then the purity of $\rho$ is given by $\mathrm{Tr}(\rho^2) = \sum_{i=1}^{mn} \lambda_i^2$.  

Note that $\mathrm{Tr}(\rho) = \sum_{i=1}^{mn} \lambda_i=1$. Since $\rho$ has the APPT property, by \cite{Hi} we have  $\lambda_1 \le \lambda_{mn-1} + 2\sqrt{\lambda_{mn-2} \lambda_{mn}}$. In particular, $\lambda_1 \le \lambda_{mn-2} + \lambda_{mn-1} +\lambda_{mn}$.

Let $\CS_{k}$ be the set of real sequences $\pmb{\lambda} = (\lambda_1, \lambda_2, \cdots, \lambda_{k})$ such that $\lambda_1 \ge \lambda_2 \ge \cdots \ge \lambda_{k} \ge 0$, $\sum_{i=1}^{k} \lambda_i =1$ and $\lambda_1 \le \lambda_{k-2} + \lambda_{k-1} +\lambda_{k}$. Note that $\CS_{k}$ is a convex polytope. 

Define a function $F$ on $\CS_k$ by $F(\pmb{\lambda}) = \sum_{i=1}^{k} \lambda_i^2$. We want to maximize $F(\pmb{\lambda})$. Since $F$ is a strictly convex function on the compact convex polytope $\CS_{k}$,  the maximum of $F$  on $\CS_{k}$ exists and this maximum is attained only at certain vertices of $\CS_{k}$. 

\begin{lemma}
Let $k \ge 4$. The vertices of $\CS_{k}$ are given as follows.
\begin{itemize}
\item Type $1$ vertices: $\Big( \underbrace{\frac{1}{i}, \cdots,  \frac{1}{i}}_{i}, \,  \underbrace{0, \cdots 0}_{k-i} \Big)$ where $k-2 \le i \le k$.
\item Type $2$ vertices:  $\Big( \underbrace{\frac{2}{k-1+i}, \cdots, \frac{2}{k-1+i}}_{i}, \underbrace{\frac{1}{k-1+i}, \cdots,  \frac{1}{k-1+i}}_{k-1-i},  0 \Big)$ where $1 \le i \le k-3$.
\item Type $3$ vertices:  $\Big( \underbrace{\frac{3}{k+2i}, \cdots, \frac{3}{k+2i}}_{i}, \underbrace{\frac{1}{k+2i}, \cdots,  \frac{1}{k+2i}}_{k-i} \Big)$ where $1 \le i \le k-3$.
\end{itemize} 
\end{lemma}

\begin{proof}
Let $\CT_{k}$ be  the set of real sequences $\pmb{\lambda} = (\lambda_1, \lambda_2, \cdots, \lambda_{k})$ such that $\lambda_1 \ge \lambda_2 \ge \cdots \ge \lambda_{k} \ge 0$ and $\sum_{i=1}^{k} \lambda_i =1$. Then $\CT_{k}$ is a convex polytope with vertices $\pmb{\nu}_i = \Big( \underbrace{\frac{1}{i}, \cdots,  \frac{1}{i}}_{i}, \,  \underbrace{0, \cdots 0}_{k-i} \Big)$, where $1 \le i \le k$. Moreover, any straight segment $\overline{\pmb{\nu}_i\pmb{\nu}_j}$, $i \not= j$, is an edge of $\CT_{k}$. 

Let $G(\pmb{\lambda}) = \lambda_{k-2} + \lambda_{k-1} +\lambda_{k} - \lambda_1$. Then $\CS_{k}=\{ \pmb{\lambda} \in \CT_{k} \mid G(\pmb{\lambda}) \ge 0\}$. 
Note that 
$$
G(\pmb{\nu}_i) = \begin{cases}
	  - \frac{1}{i} & \text{$1 \le i \le k-3$}, \smallskip \\
	  
            0 & \text{if $i=k-2$},  \smallskip \\
            
            \frac{1}{k-1}  & \text{if $i = k-1$}, \smallskip \\
            
            \frac{2}{k}  & \text{if $i = k$}.
\end{cases}
$$
Hence $\pmb{\nu}_i \in \CS_{k}$ if and only if $k-2 \le i \le k$. 

We now determine vertices of $\CS_k$ that do not come from vertices of $\CT_k$. According to the theory of polyhedra, such a vertex must be a point $\pmb{\lambda}$ on an edge $\overline{\pmb{\nu}_i\pmb{\nu}_j}$ satisfying $G(\pmb{\nu}_i) > 0 > G(\pmb{\nu}_j)$ and $G(\pmb{\lambda}) = 0$. Write $\pmb{\lambda} = t \pmb{\nu}_i + (1-t) \pmb{\nu}_j$, $0 < t < 1$, where $1 \le i \le k-3$ and $k-1 \le j \le k$. Then 
\begin{eqnarray*}
\pmb{\lambda} &=& t \Big( \underbrace{\frac{1}{i}, \cdots,  \frac{1}{i}}_{i}, \,  \underbrace{0, \cdots 0}_{k-i} \Big) + (1-t) \Big( \underbrace{\frac{1}{j}, \cdots,  \frac{1}{j}}_{j}, \,  \underbrace{0, \cdots 0}_{k-j} \Big) \\
&=& \Big( \underbrace{\frac{t}{i} + \frac{1-t}{j}, \cdots, \frac{t}{i} + \frac{1-t}{j}}_{i}, \underbrace{\frac{1-t}{j}, \cdots,  \frac{1-t}{j}}_{j-i},  \underbrace{0, \cdots 0}_{k-j} \Big)
\end{eqnarray*}
and so
$$
G(\pmb{\lambda}) = \begin{cases}
            \frac{1-t}{k-1} - \frac{t}{i}  & \text{if $j = k-1$}, \smallskip \\
            
            \frac{2(1-t)}{k} - \frac{t}{i}   & \text{if $j = k$}.
\end{cases}
$$ 
Hence $G(\pmb{\lambda}) = 0$ if and only if $$
t = \begin{cases}
            \frac{i}{k-1+i}  & \text{if $j = k-1$}, \smallskip \\
            
            \frac{2i}{k+2i}    & \text{if $j = k$}.
\end{cases}
$$ 
This is equivalent to 
$$
\pmb{\lambda}  = \begin{cases}
             \Big( \underbrace{\frac{2}{k-1+i}, \cdots, \frac{2}{k-1+i}}_{i}, \underbrace{\frac{1}{k-1+i}, \cdots,  \frac{1}{k-1+i}}_{k-1-i},  0 \Big)  & \text{if $j = k-1$}, \smallskip \\
             \Big( \underbrace{\frac{3}{k+2i}, \cdots, \frac{3}{k+2i}}_{i}, \underbrace{\frac{1}{k+2i}, \cdots,  \frac{1}{k+2i}}_{k-i} \Big)   & \text{if $j = k$}.
\end{cases}
$$ 
These vertices, together with $\pmb{\nu}_{k-2}$, $\pmb{\nu}_{k-1}$, $\pmb{\nu}_{k}$, are all vertices of $\CS_k$.
\end{proof}

\begin{theorem} \label{linear}
If $k \ge 8$ then the maximum of  $F(\pmb{\lambda}) $ on $\CS_{k}$ is 
$$
\begin{cases}
\frac{4}{3k} & \text{if $k \equiv 0 \pmod{4}$}, \medskip \\

	                 \frac{4(3k-2)}{(3k-1)^2} & \text{if $k \equiv 1 \pmod{4}$}, \medskip \\
	                 
	                 \frac{4(3k + 4)}{(3k+2)^2}  & \text{if $k \equiv 2 \pmod{4}$}, \medskip\\	              
	                 
	               \frac{4(3k + 2)}{(3k+1)^2} & \text{if $k \equiv 3 \pmod{4}$}.
\end{cases}
$$
Moreover, this maximum  is attained only at
$$
\begin{cases}
	   \pmb{\lambda} = \Big( \underbrace{\frac{2}{k}, \cdots, \frac{2}{k}}_{\frac{k}{4}}, \underbrace{\frac{2}{3k}, \cdots,  \frac{2}{3k}}_{\frac{3k}{4}} \Big) & \text{if $k \equiv 0 \pmod{4}$ and $k>8$},  \smallskip \\
	   
	   \pmb{\lambda} \in \Big\{ \Big( \frac{1}{6},  \frac{1}{6},  \frac{1}{6},  \frac{1}{6},  \frac{1}{6},  \frac{1}{6}, 0, 0\Big), \, \Big( \frac{1}{4}, \frac{1}{4}, \frac{1}{12},  \frac{1}{12}, \frac{1}{12}, \frac{1}{12}, \frac{1}{12}, \frac{1}{12}\Big) \Big\}& \text{if $k=8$}, \smallskip \\
	   
             \pmb{\lambda} = \Big( \underbrace{\frac{6}{3k-1}, \cdots,  \frac{6}{3k-1}}_{\frac{k-1}{4}}, \,  \underbrace{\frac{2}{3k-1}, \cdots,  \frac{2}{3k-1}}_{\frac{3k+1}{4}} \Big) & \text{if $k \equiv 1 \pmod{4}$}, \smallskip \\
             
             \pmb{\lambda} = \Big( \underbrace{\frac{6}{3k+2}, \cdots,  \frac{6}{3k+2}}_{\frac{k+2}{4}}, \,  \underbrace{\frac{2}{3k+2}, \cdots,  \frac{2}{3k+2}}_{\frac{3k-2}{4}} \Big) & \text{if $k \equiv 2\pmod{4}$}, \smallskip \\
                      
             \pmb{\lambda} = \Big( \underbrace{\frac{6}{3k+1}, \cdots,  \frac{6}{3k+1}}_{\frac{k+1}{4}}, \,  \underbrace{\frac{2}{3k+1}, \cdots,  \frac{2}{3k+1}}_{\frac{3k-1}{4}} \Big) & \text{if $k \equiv 3\pmod{4}$}.         
\end{cases}
$$

If $4 \le k \le 7$, then the maximum of  $F(\pmb{\lambda}) $ on $\CS_{k}$ is equal to $\frac{1}{k-2}$. Moreover, this maximum is attained only at $\pmb{\lambda} = \Big( \underbrace{\frac{1}{k-2}, \cdots,  \frac{1}{k-2}}_{k-2}, 0, 0 \Big)$.

\end{theorem}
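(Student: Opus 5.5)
Since $F$ is strictly convex on the compact polytope $\CS_k$, its maximum is attained only at vertices, and the Lemma lists all of them. The plan is therefore to evaluate $F$ at each vertex type, maximize within each type over the integer parameter $i$, and then compare the three types.

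\textbf{Type 1.} For $k-2\le i\le k$ we have $F=\frac1i$, so the best value is $\frac{1}{k-2}$.

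\textbf{Type 2.} Put $N=k-1+i$. Then
$$F=\frac{4i+(k-1-i)}{N^2}=\frac{3i+k-1}{(k-1+i)^2}.$$

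\textbf{Type 3.} Put $M=k+2i$. Then
$$F=\frac{9i+(k-i)}{M^2}=\frac{8i+k}{(k+2i)^2}=\frac{4M-3k}{M^2}.$$

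Treat $i$ as a real variable. In Type 3, $g(M)=\frac{4M-3k}{M^2}$ is maximized at $M=\frac{3k}{2}$, i.e. $i=\frac k4$, with value $\frac{4}{3k}$. Because $g$ is unimodal in $M$, the integer optimum is at $i=\lfloor k/4\rfloor$ or $\lceil k/4\rceil$. Comparing the two is a short computation with $M=k+2i$ in each residue class mod 4, and it should reproduce the four stated formulas. For example, when $k\equiv1$ the candidate $i=\frac{k-1}{4}$ gives $M=\frac{3k-1}{2}$ and
$$F=\frac{4(3k-2)}{(3k-1)^2}.$$
For the two candidates $M=\frac{3k}{2}\pm\delta$, one has $g\bigl(\tfrac{3k}{2}+\delta\bigr)-g\bigl(\tfrac{3k}{2}-\delta\bigr)$ of sign determined by $\delta$ (a larger $M$ is better at equal distance), which settles the ties in each residue class. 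We must also check that the optimal $i$ lies in the allowed range $1\le i\le k-3$, which holds for $k\ge 8$.

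In Type 2, write $h(N)=\frac{3N-2k+2}{N^2}$. Its real maximum is at $N=\frac{4(k-1)}{3}$, with value $\frac{9}{8(k-1)}$. Since $\frac{9}{8(k-1)}<\frac{4}{3k}$ reduces to $27k<32k-32$, i.e. $k>6.4$, for $k\ge 8$ Type 2 never beats the real Type 3 optimum. The real danger is that the discretized Type 3 value falls below $\frac{9}{8(k-1)}$ or below $\frac1{k-2}$. So I must compare the integer Type 3 optimum directly against the real Type 2 bound $\frac{9}{8(k-1)}$ (or against the integer Type 2 optimum when they are close), and against $\frac1{k-2}$.

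For $\frac1{k-2}$ versus $\frac{4}{3k}$, the inequality $\frac1{k-2}\le\frac4{3k}$ is equivalent to $3k\le4k-8$, i.e. $k\ge8$, with equality exactly at $k=8$. This explains the two maximizers at $k=8$: the Type 1 vertex with six entries $\frac16$, and the Type 3 vertex with $i=2$. For $k\equiv1,2,3$ the Type 3 value is slightly less than $\frac4{3k}$, so I need to verify that it still exceeds $\frac1{k-2}$. Cross-multiplying gives a quadratic inequality in $k$, true for $k\ge 9$. The case $k=9$ with value $\frac{4\cdot25}{26^2}$ versus $\frac17$ must be checked by hand; similarly $k=10$ and $k=11$.

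The main obstacle is the Type 2 versus Type 3 comparison after discretization, and I expect $k=9,10,11$ to be the tight cases. I would first prove, for $k\ge 12$, the clean bound that the Type 3 integer optimum is at least
$$g\Bigl(\tfrac{3k}{2}\pm 1\Bigr)=\frac{4}{3k}-O(k^{-3}),$$
which is greater than $\frac{9}{8(k-1)}$ by a margin of order $k^{-2}$. I would then verify $8\le k\le 11$ by listing all vertices explicitly.

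For $4\le k\le7$, a finite check of the Lemma's vertices shows that $\frac1{k-2}$ wins. For instance, at $k=7$ Type 3 gives at most $\frac{4}{21}<\frac15$. Uniqueness follows from strict convexity together with the strict inequalities obtained.
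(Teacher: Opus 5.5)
Your proposal is correct and follows essentially the same route as the paper: evaluate $F$ at the three vertex families of the Lemma, optimize each family over the integer parameter using the critical points $i=\frac{k-1}{3}$ and $i=\frac{k}{4}$, and compare the families case by case mod $4$. The paper's bound $\max\bigl(\frac{9(2k-1)}{(4k-3)^2},\frac{9}{8(k-1)},\frac{9(2k-3)}{(4k-5)^2}\bigr)\ge M_2$ equals the continuous Type~2 maximum $\frac{9}{8(k-1)}$ you use, and two small slips do not affect the argument: the gap $\frac{4}{3k}-\frac{9}{8(k-1)}=\frac{5k-32}{24k(k-1)}$ is of order $k^{-1}$ rather than $k^{-2}$, and the equal-distance tie-break applies only for $k\equiv 2 \pmod 4$, since for $k\equiv 1$ the candidates sit at distances $\frac12$ and $\frac32$ from $\frac{3k}{2}$ and need the direct comparison (which the closer one wins).
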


\begin{proof}
Recall that the maximum of $F$  on $\CS_{k}$  is attained only at certain vertices of $\CS_{k}$. 

At type $1$ vertices $\pmb{\lambda} =  \Big( \underbrace{\frac{1}{i}, \cdots,  \frac{1}{i}}_{i}, \,  \underbrace{0, \cdots 0}_{k-i} \Big)$ where $k-2 \le i \le k$, we have $F(\pmb{\lambda}) = \frac{1}{i}$. So the maximum at type $1$ vertices is $M_1 = \frac{1}{k-2}$. 

At type $2$ vertices $\pmb{\lambda} = \Big( \underbrace{\frac{2}{k-1+i}, \cdots, \frac{2}{k-1+i}}_{i}, \underbrace{\frac{1}{k-1+i}, \cdots,  \frac{1}{k-1+i}}_{k-1-i},  0 \Big)$ where $1 \le i \le k-3$, we have $F(\pmb{\lambda}) = \frac{k -1 + 3i }{(k-1+i)^2}$. The derivative of the function $\frac{k -1 + 3x}{(k-1+x)^2}$ in $x$ is $\frac{k -1- 3x}{(k-1+x)^3}$, so its critical point is $x =\frac{k-1}{3}$. The maximum at type $2$ vertices is
$$
M_2 = \begin{cases}
	     \max (\frac{9(2k -4)}{(4k-6)^2}, \frac{9(2k -1)}{(4k-3)^2}) = \frac{9(2k -1)}{(4k-3)^2} & \text{if $k \equiv 0 \pmod{3}$}, \smallskip \\
	     
	      \frac{9}{8(k-1)}& \text{if $k \equiv 1 \pmod{3}$}, \smallskip \\
	      
             \max (\frac{9(2k -3)}{(4k-5)^2}, \frac{18k}{(4k-2)^2}) = \frac{9(2k -3)}{(4k-5)^2} & \text{if $k \equiv 2 \pmod{3}$}.       
\end{cases}
$$

At type $3$ vertices  $\pmb{\lambda} = \Big( \underbrace{\frac{3}{k+2i}, \cdots, \frac{3}{k+2i}}_{i}, \underbrace{\frac{1}{k+2i}, \cdots,  \frac{1}{k+2i}}_{k-i} \Big)$ where $1 \le i \le k-3$, we have $F(\pmb{\lambda}) = \frac{k + 8i }{(k+2i)^2}$. The derivative of the function $\frac{k + 8x}{(k+2x)^2}$ in $x$ is $\frac{4(k-4x)}{(k+2x)^3}$, so its critical point is $x =\frac{k}{4}$. The maximum at type $3$ vertices is
$$
M_3 = \begin{cases}
\frac{4}{3k} & \text{if $k \equiv 0 \pmod{4}$}, \smallskip \\

	                 \max (\frac{4(3k-2)}{(3k-1)^2}, \frac{4(3k + 6)}{(3k+3)^2}) = \frac{4(3k-2)}{(3k-1)^2} & \text{if $k \equiv 1 \pmod{4}$}, \smallskip \\
	                 
	                \max (\frac{4(3k -4)}{(3k-2)^2}, \frac{4(3k + 4)}{(3k+2)^2}) = \frac{4(3k + 4)}{(3k+2)^2}  & \text{if $k \equiv 2 \pmod{4}$}, \smallskip\\
	                
	                \max (\frac{4(3k-6)}{(3k-3)^2}, \frac{4(3k + 2)}{(3k+1)^2}) = \frac{4(3k + 2)}{(3k+1)^2} & \text{if $k \equiv 3 \pmod{4}$}. 
\end{cases}
$$

We now compare all the values of the function $F(\pmb{\lambda})$ at type $1$, $2$ and $3$ vertices. This comparison is divided into four cases as follows.

\smallskip
\underline{Case $1$}: $k \equiv 0 \pmod{4}$. 
\smallskip

If $k \ge 12$, then $M_3 = \frac{4}{3k} > \max (\frac{9(2k -1)}{(4k-3)^2}, \frac{9}{8(k-1)}, \frac{9(2k -3)}{(4k-5)^2} ) \ge M_2$ and $M_3 = \frac{4}{3k} > M_1 = \frac{1}{k-2}$. Hence the maximum is $\frac{4}{3k}$ which is attained at the type $3$ vertex $\Big( \underbrace{\frac{2}{k}, \cdots, \frac{2}{k}}_{\frac{k}{4}}, \underbrace{\frac{2}{3k}, \cdots,  \frac{2}{3k}}_{\frac{3k}{4}} \Big)$. 

If $k=8$, then $M_3 = \frac{1}{6} = M_1 > M_2 = \frac{13}{81} $. Hence the maximum is $\frac{1}{6}$ which is attained at the type $1$ vertex $\Big( \frac{1}{6},  \frac{1}{6},  \frac{1}{6},  \frac{1}{6},  \frac{1}{6},  \frac{1}{6}, 0, 0\Big)$ and the type $3$ vertex $\Big( \frac{1}{4}, \frac{1}{4}, \frac{1}{12},  \frac{1}{12}, \frac{1}{12}, \frac{1}{12}, \frac{1}{12}, \frac{1}{12}\Big)$.

If $k=4$, then $M_1 =  \frac{1}{2} > M_2 = \frac{3}{8} > M_3 = \frac{1}{3}$. Hence the maximum is $\frac{1}{2}$ which is attained at the type $1$ vertex $\Big( \frac{1}{2},  \frac{1}{2},  0, 0\Big)$.

\smallskip
\underline{Case $2$}: $k \equiv 1 \pmod{4}$.
\smallskip

If $k \ge 9$, then $M_3 = \frac{4(3k-2)}{(3k-1)^2}> \max (\frac{9(2k -1)}{(4k-3)^2}, \frac{9}{8(k-1)}, \frac{9(2k -3)}{(4k-5)^2} ) \ge M_2$ and $M_3 = \frac{4(3k-2)}{(3k-1)^2}> M_1 = \frac{1}{k-2}$. Hence the maximum is $\frac{4(3k-2)}{(3k-1)^2}$ which is attained at the type $3$ vertex $\Big( \underbrace{\frac{6}{3k-1}, \cdots,  \frac{6}{3k-1}}_{\frac{k-1}{4}}, \,  \underbrace{\frac{2}{3k-1}, \cdots,  \frac{2}{3k-1}}_{\frac{3k+1}{4}} \Big)$.

If $k=5$, then $M_1 =  \frac{1}{3} > M_2 = \frac{7}{25} > M_3 = \frac{13}{49}$. Hence the maximum is $\frac{1}{3}$ which is attained at the type $1$ vertex $\Big( \frac{1}{3},  \frac{1}{3},  \frac{1}{3}, 0, 0\Big)$.

\smallskip
\underline{Case $3$}:  $k \equiv 2 \pmod{4}$.
\smallskip

If $k \ge 10$, then $M_3 = \frac{4(3k + 4)}{(3k+2)^2} > \max (\frac{9(2k -1)}{(4k-3)^2}, \frac{9}{8(k-1)}, \frac{9(2k -3)}{(4k-5)^2} ) \ge M_2$ and $M_3 = \frac{4(3k + 4)}{(3k+2)^2} > M_1 = \frac{1}{k-2}$. Hence the maximum is $\frac{4(3k + 4)}{(3k+2)^2}$ which is attained at the type $3$ vertex $\Big( \underbrace{\frac{6}{3k+2}, \cdots,  \frac{6}{3k+2}}_{\frac{k+2}{4}}, \,  \underbrace{\frac{2}{3k+2}, \cdots,  \frac{2}{3k+2}}_{\frac{3k-2}{4}} \Big)$.

If $k=6$, then $M_1 =  \frac{1}{4} > M_2 = \frac{11}{49} > M_3 = \frac{11}{50}$. Hence the maximum is $\frac{1}{4}$ which is attained at the type $1$ vertex $\Big( \frac{1}{4},  \frac{1}{4},  \frac{1}{4},  \frac{1}{4}, 0, 0\Big)$.

\smallskip
\underline{Case $4$}: $k \equiv 3 \pmod{4}$.
\smallskip

If $k \ge 11$, then $M_3 = \frac{4(3k+2)}{(3k+1)^2}> \max (\frac{9(2k -1)}{(4k-3)^2}, \frac{9}{8(k-1)}, \frac{9(2k -3)}{(4k-5)^2} ) \ge M_2$ and $M_3 = \frac{4(3k+2)}{(3k+1)^2}> M_1 = \frac{1}{k-2}$. Hence the maximum is $\frac{4(3k+2)}{(3k+1)^2}$ which is attained at the type $3$ vertex $\Big( \underbrace{\frac{6}{3k+1}, \cdots,  \frac{6}{3k+1}}_{\frac{k+1}{4}}, \,  \underbrace{\frac{2}{3k+1}, \cdots,  \frac{2}{3k+1}}_{\frac{3k-1}{4}} \Big)$.

If $k=7$, then $M_1 =  \frac{1}{5} > M_3 = \frac{23}{121} > M_2 = \frac{3}{16}$. Hence the maximum is $\frac{1}{5}$ which is attained at the type $1$ vertex $\Big( \frac{1}{5}, \frac{1}{5},  \frac{1}{5},  \frac{1}{5},  \frac{1}{5}, 0, 0\Big)$.
\end{proof}

We now complete the proof of Theorem $1$. Since $\lambda_1 \le \lambda_{mn-2} + \lambda_{mn-1} +\lambda_{mn}$ for any APPT state $\rho$ on $\CH_A \otimes \CH_B$, the purity $\mathrm{Tr}(\rho^2)$ is bounded above by the maximum of the function $F(\pmb{\lambda})$ on $\CS_{mn}$. Theorem $1$ then follows from Theorem \ref{linear}  by taking $k=mn$.

\end{document}